\newcommand{\alp}[1]{{\rm alph}(#1)}
\newcommand{\LL}{\mathcal L}
\newcommand{\comment}[1]{}
\newcommand{\mymod}{\hspace{3pt}{\rm mod}\,}
\newcommand{\MRR}[1]{}
\newcommand{\pref}{{\rm pref}}
\newtheorem{lemma}{Lemma}
\newtheorem{cor}{Corollary}
\newtheorem*{lemma*}{Lemma}
\newcommand{\NN}{\mathbb{N}}
\newcommand{\abs}[1]{\vert #1 \vert}
\newcommand{\ap} [2]{\ensuremath{\approx_{#1,#2}}}
\newcommand{\si} [2]{\ensuremath{\sim_{#1,#2}}}
\newcommand{\uv} [1]{``#1"}
\newcommand{\FW} {{\tt FW}}
\newenvironment{example}[1][{\rm \bf Example}]
{\begin{proof}[{\rm \bf #1}]
}
{\end{proof}
}
\begin{document}
\title{On an algorithm for multiperiodic words}
\author{\v St\v ep\' an Holub}
\address{Department of Algebra, Charles University, Sokolovsk\'a 83, 175 86 Praha, Czech Republic}
\email{holub@karlin.mff.cuni.cz}
\subjclass{68R15}
\keywords{periodicity, combinatorics on words}

\begin{abstract}
We consider an algorithm by Tijdeman and Zamboni constructing a word of length $n$ that has periods $p_1,\dots,$ $p_r$, and the richest possible alphabet. We show that this algorithm can be easily stated and its correctness briefly proved using the class equivalence approach.
\end{abstract}

\maketitle
\section*{A short (personal) history}
Non-trivial words with a given set of periods $P=\{p_1,p_2,\dots,p_r\}$ have received a lot of attention in the past decade. The motivation was to generalize the result by Fine and Wilf dealing with two periods, which became part of the folklore. A word with periods $P$ is called \emph{trivial} if $\gcd(P)$ is its period too. Papers \cite{CI} and \cite{TZ1} are two (independent) results considering non-trivial such words with maximal length and maximal cardinality of its alphabet. Those papers completed some older research of Castelli, Mignosi, Restivo and Justin (see, e.g., \cite{TZ2} for more details and references). Already in 1998, I wrote a short manuscript giving an analogous result (without considering its publication), which I showed to Sorin Constantinescu during the conference WORDS 2003 in Turku, where he presented their results. Since this was passed without notice in the subsequent publication and since I considered my approach simpler and more natural, I later decided to publish it in \cite{Holub1}. There was a gap in my paper, discovered by Gw{\'{e}}na{\"{e}}l Richomme, which is fixed in \cite{HolubCor}.

The present paper extends the same approach to the construction of the richest word with a given set of periods and a given length. The basic idea is to consider relations defined by the periods and understand  letters as (names of) equivalence classes generated by those relations. The idea is obvious and well known, usually expressed using the graph terminology (edges and connected components), rather than the algebraic terminology (relations and equivalence classes). Tijdeman and Zamboni \cite{TZ2} point out that the straightforward algorithm based on the graph approach is \uv{simple but inefficient} and then present an algorithm based on less transparent combinatorial analysis. The aim of this paper is to give a short description of their algorithm, as well as a short and intuitive proof of its correctness, using consistently the graph/equivalence viewpoint. 

\section{Notation}
Let $w$ be a word of length $n$ over an alphabet $A$. The set of all letters that occur in $w$ is denoted by $\alp w$. The $i$-th letter of $w$ is denoted by $w[i-1]$ so that $w=w[0]w[1]\cdots w[n-1]$. The prefix of $w$ of length $k$ is denoted by $\pref_k(w)$. 

We say that a positive integer $p$ is \emph{a period} of a word $w$ if $w[i]=w[i+p]$ for all $0\leq i \leq \abs w-i-1$ (where $|w|$ denotes the length of the word). Note that any $p\geq\abs w$ is a period of $u$. If $P$ is a set of positive integers such that each $p\in P$ is a period of $w$, we say that $w$ has periods $P$.  

The word of length $n$ having periods $P$ and the maximal possible cardinality of $\alp w$ is called an \emph{FW-word relative to $P$} (where FW stands for \uv{Fine and Wilf} for historic reasons). The word is called \emph{trivial with respect to $P$}  if $\gcd(P)$ is a period of $w$.
The longest non-trivial FW-word relative to $P$ is called an \emph{extremal FW-word relative to $P$}. We denote its length by $\LL(P)$ (note that $\LL(P)=L(P)-1$ where $L(P)$ is the notation adopted in \cite{TZ2}).


\section{Classes of Equivalence}
Let $w$ be a word which has periods $P$. For the rest of the paper we denote $m=\min P$.  Obviously, if $i\equiv j\mymod m$ or $\abs{i-j}\in P$, then $w[i]=w[j]$. These two conditions induce the relation $\si P k$ on integers $\{0,\dots,k-1\}$ defined by: 

$i\si Pk j$ if
\begin{itemize}
	\item 	$i\equiv j\mymod m$, or 
	\item there are integers $i',j'\in \{0,\dots,k-1\}$ such that $$i\equiv i'\mymod m, \quad j\equiv j'\mymod m$$  and $$\abs{i'-j'}\in P.$$
\end{itemize}
Let $\approx_{P,k}$ be the equivalence closure of $\si P k$. In other words, we have $i \approx_{P,k} j$ if and only if $i$ and $j$ lie in the same connected component of the graph defined by edges $i \si P k j$.  The class of $\ap P k$ containing $i$ will be denoted by $[i]_{P,k}$ and represented by its minimal element $\min [i]_{P,k}$.
 Then we obtain a word $\FW(P,k)$ of length $k$ over the alphabet $\NN$ by 
\begin{align*}
	\FW(P,k)[i]=\min [i]_{P,k}.
\end{align*}
The construction immediately yields that $\FW(P,k)$ is the unique (up to renaming of letters) FW-word of length $k$ relative to $P$.

\section{The algorithm}
The basic step of the algorithm is the reduction of $P$  to a new set of periods $Q$ defined by 
\begin{equation}\label{r}
Q=\{p-m \ \vert \  p\in P, p\neq m\}\cup \{m\}
\end{equation}
(where $m=\min P$ according to our convention). This reduction is, in fact, one step in the Eucledean algorithm, and is well known in the literature on multiperiodic words. The key fact about $P$ and $Q$ is expressed in the following lemma, which is an improved version of Lemma 2 from \cite{Holub1}.
\begin{lemma}\label{uk}
Let $k\geq 0$. Then for all $i,j\in\{0,1,\dots,k\}$
\begin{align*}
	[i]_{Q,k}=[j]_{Q,k} \quad \text{if and only if}\quad [i]_{P,k+m}=[j]_{P,k+m}.
\end{align*}
\end{lemma}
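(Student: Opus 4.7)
The plan is first to reformulate the equivalence relations as graph connectivity. For $R \in \{P, Q\}$ the minimum of $R$ lies in $R$ (this uses $m \in Q$, which holds by construction of $Q$), and this lets me show that $\ap{R}{n}$ coincides with connectivity in the graph $G_R^n$ on vertex set $\{0, \dots, n-1\}$ whose edges are the pairs $\{u, v\}$ with $|u - v| \in R$: walking in $(\min R)$-steps realises the congruence clause, and the shift clause in $\si{R}{n}$ is similarly captured by concatenating such steps with one direct edge. From now on I work with $G_P^{k+m}$ and $G_Q^k$.

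For the forward implication, each edge of length $m$ in a walk in $G_Q^k$ is already a $P$-edge, while an edge of length $p - m$ (for $p \in P$, $p > m$) joining $u$ and $u + (p - m)$ inside $\{0, \dots, k-1\}$ is replaced by the two-step walk $u \to u + p \to u + (p - m)$ in $G_P^{k+m}$; the intermediate vertex $u + p = (u + (p - m)) + m$ lies in $\{0, \dots, k+m-1\}$ because $u + (p - m) \le k - 1$.

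For the converse I project a walk $v_0 = i, v_1, \dots, v_\ell = j$ in $G_P^{k+m}$ into $\{0, \dots, k-1\}$ by the map $\pi$ sending $v < k$ to $v$ and $v \ge k$ to $v - m$. A preliminary observation legitimises the projection: when $k < m$, every $v \in \{k, \dots, m-1\}$ is isolated in $G_P^{k+m}$, since $v + m$ and $v + p$ (for $p > m$) exceed $k + m - 1$ while $v - m$ and $v - p$ are negative, so such a vertex never appears on a walk between elements of $\{0, \dots, k-1\}$ and each $v_s \ge k$ actually satisfies $v_s \ge m$. A case analysis on whether each endpoint of an edge $(v_s, v_{s+1})$ is below or at least $k$ then shows that $(\pi(v_s), \pi(v_{s+1}))$ is connected in $G_Q^k$: the sub-case that both endpoints are at least $k$ cannot occur (any two elements of $\{k, \dots, k + m - 1\}$ are at distance at most $m - 1 < m$, hence not linked by a $P$-edge); the sub-case that both are below $k$ is handled exactly as in the forward implication; and if exactly one endpoint is at least $k$, the projected pair is either a self-loop (for an original edge of length $m$) or a direct $Q$-edge of length $p - m$ (for one of length $p > m$). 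The main obstacle is this case analysis together with the isolation claim; both are elementary but must be executed carefully so that every projected or inserted vertex lies in the allowed range.
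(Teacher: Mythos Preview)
Your argument is correct and follows the same strategy as the paper's proof: in both directions one takes a chain witnessing the equivalence and rewrites each link into one or two links of the other relation. The differences are essentially cosmetic --- you first reduce $\approx_{R,n}$ to connectivity in the graph $G_R^n$ (which makes the role of $m\in Q$ transparent and bypasses any question about $\min Q$ versus $m$), and for the converse you use the projection $\pi$ with an explicit case analysis together with the isolation of the vertices in $\{k,\dots,m-1\}$ (this isolation is exactly the content of Lemma~\ref{obs} with $n=k+m$, which the paper states separately), whereas the paper instead sends each $i_s$ to $i_s \bmod m$ and argues directly that $\max\{i_s,i_{s+1}\}-m \si{Q}{k} \min\{i_s,i_{s+1}\}$.
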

\begin{proof}
``$\Rightarrow$'': If $[i]_{Q,k}=[j]_{Q,k}$, then there is a sequence
 $i=i_0,\dots, i_\ell=j,$
of numbers from $\{0,1,\dots,k-1\}$ such that $$i_s\si Q{k} i_{s+1}$$ for each $s=0,\dots,\ell-1$. 
The relation $i_s\si Q{k} i_{s+1}$ implies $i_s\si P{k+m} i_{s+1}$, since either 
\begin{itemize}
	\item $i_s\equiv i_{s+1}\mymod m$, or 
	\item $\max\{i_s,i_{s+1}\}+m-\min\{i_s,i_{s+1}\}\in P$
\end{itemize}
Therefore $[i]_{P,k+m}=[j]_{P,k+m}$.

``$\Leftarrow$'': 
On the other hand, let $i=i_0,\dots, i_\ell=j,$ be a sequence of numbers from $\{0,\dots,k+m-1\}$, with $i,j\in \{0,\dots,k-1\}$, such that $$i_s\si P{k+m} i_{s+1}$$ for each $s=0,\dots,\ell-1$.
Certainly, we can suppose that the numbers in the sequence are pairwise distinct, whence $|i_s- i_{s+1}|\geq m$ and both $\min\{i_s,i_{s+1}\}$ and $\max\{i_s,i_{s+1}\}-m$ are in $\{0,1,\dots,k-1\}$. We now see that
$$\max\{i_s,i_{s+1}\}-m \si Q{k} \min\{i_s,i_{s+1}\}.$$
Therefore the sequence $$i=i_0,(i_0\mymod m),\dots,(i_\ell \mymod m),i_\ell=j$$ proves $[i]_{Q,k}=[j]_{Q,k}$.
\end{proof}
We have an immediate corollary.
\begin{cor}\label{cor}
For any $k\geq 0$, the word $\FW(Q,k)$ is a prefix of $\FW(P,k+m)$.
\end{cor}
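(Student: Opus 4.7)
The plan is to unpack what it means, letter by letter, for $\FW(Q,k)$ to be a prefix of $\FW(P,k+m)$ and then invoke Lemma \ref{uk}. Concretely, I need to show that for every $i\in\{0,\dots,k-1\}$,
\[
	\min [i]_{Q,k} = \min [i]_{P,k+m}.
\]
This is a pointwise equality between the $i$-th letters of the two words, so once it is established the prefix claim follows directly from the definition of $\FW$.

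First I would observe, using Lemma \ref{uk}, that the equivalence $\ap{P}{k+m}$, when restricted to indices in $\{0,\dots,k-1\}$, yields exactly the same partition as $\ap{Q}{k}$; that is,
\[
	[i]_{P,k+m}\cap\{0,\dots,k-1\} = [i]_{Q,k}
\]
for every $i\in\{0,\dots,k-1\}$. The inclusion $\supseteq$ and $\subseteq$ are just the two implications of the lemma applied to pairs from $\{0,\dots,k-1\}$.

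Next I would note the elementary but crucial fact that any index in $\{k,\dots,k+m-1\}$ strictly exceeds any index in $\{0,\dots,k-1\}$. Since $[i]_{P,k+m}$ already contains $i<k$, the minimum of $[i]_{P,k+m}$ is necessarily attained in the restricted set $[i]_{P,k+m}\cap\{0,\dots,k-1\}$. Combined with the previous paragraph, this gives $\min[i]_{P,k+m}=\min[i]_{Q,k}$, which is exactly the letterwise identity we wanted.

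There is no real obstacle here; the only point worth stating carefully is that the $m$ extra indices $\{k,\dots,k+m-1\}$ appearing on the $P$-side cannot become the minimum of a class that already meets $\{0,\dots,k-1\}$. Once that is made explicit, the corollary is an immediate unwrapping of Lemma \ref{uk} together with the definition of $\FW$.
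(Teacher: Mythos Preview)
Your argument is correct and is exactly the unpacking the paper leaves implicit: the paper states the corollary as ``immediate'' from Lemma~\ref{uk} without giving a separate proof, and your letterwise verification via $\min[i]_{P,k+m}=\min\bigl([i]_{P,k+m}\cap\{0,\dots,k-1\}\bigr)=\min[i]_{Q,k}$ is precisely the intended reasoning.
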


The following lemma is an easy observation.
\begin{lemma}\label{obs}
Let $n-m\leq i\leq  m-1$. Then $[i]_{P,n}=\{i\}$.
\end{lemma}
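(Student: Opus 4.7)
The plan is to show directly that the single-step relation $\si{P}{n}$ connects $i$ to no element other than itself, so the equivalence closure $\ap{P}{n}$ cannot enlarge the singleton $\{i\}$ either. The hypothesis $n-m\leq i\leq m-1$ places $i$ in a ``middle band'' of $\{0,\dots,n-1\}$ that is too narrow to admit another representative modulo $m$ and too far from the boundary to be the endpoint of a jump of size belonging to $P$.

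First I would verify that any $j \in \{0,\dots,n-1\}$ with $i\equiv j\mymod m$ must equal $i$. Writing $j = i + km$ with $k\in\Z$, the choice $k\geq 1$ gives $j\geq i+m\geq n$, while $k\leq -1$ gives $j\leq i-m\leq -1$; both contradict $0\leq j\leq n-1$. This rules out the first clause of the definition of $\si{P}{n}$.

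Next I would treat the second clause. Suppose $i',j'\in\{0,\dots,n-1\}$ satisfy $i\equiv i'\mymod m$, $j\equiv j'\mymod m$, and $|i'-j'|\in P$. Since $|i'-j'|\geq m$ (recall $m=\min P$), and since the previous paragraph applied to $i$ and $i'$ forces $i'=i$, we get $|i - j'|\geq m$. Repeating the same boundary argument with $j'$ in place of $j$: if $j'\geq i+m$ then $j'\geq n$, and if $j'\leq i - m$ then $j'\leq -1$, again contradicting $j'\in\{0,\dots,n-1\}$. So no such $j'$ exists, and the second clause likewise contributes no edge incident to $i$.

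Since $i$ has no $\si{P}{n}$-neighbor other than itself, any chain of $\si{P}{n}$-steps starting at $i$ stays at $i$, so $[i]_{P,n}=\{i\}$. The only delicate point — and the place where one must be careful — is to check both modular directions ($k\geq 1$ and $k\leq -1$, resp.\ $j'>i$ and $j'<i$) and to observe that the two hypotheses $i\geq n-m$ and $i\leq m-1$ are used symmetrically: the first blocks jumps to the right, the second blocks jumps to the left.
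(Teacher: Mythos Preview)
Your proof is correct and follows exactly the same idea as the paper's: show that $i$ has no $\si{P}{n}$-neighbor because $i\pm p$ (for any $p\in P$, including $m$) falls outside $\{0,\dots,n-1\}$. The paper compresses this into two lines, while you spell out the two clauses of the definition and the two directions of the boundary argument separately, but the content is identical.
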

\begin{proof}
Both $i-p$ and $i+p$ are out of range $\{0,1,\dots,n-1\}$ for any $p\in P$ (including $m$). Therefore $i$ is not related by $\si P{k}$ to any other element.
\end{proof}

From Corollary \ref{cor} and Lemma \ref{obs}, the formula 
\[\LL_P = m+\max\{\LL_Q,m - 1\}\]
can be readily derived (see \cite{Holub1, HolubCor}). In addition, it yields the following construction of $\FW(P,n)$, equivalent to Algorithm B described in \cite{TZ2}. 
\begin{enumerate}
	\item If $n\leq m$, then Lemma \ref{obs} with $k=0$ gives $$\FW(P,n)={\tt 0\cdot1\cdots (n-1)}.$$ (Recall that we consider integers as letters. To stress that, we use the typewriter font for them. The multiplication sign means concatenation). 
	\item Let $n>m$. Since the word $\FW(P,n)$ has a period $m$, it is determined by its prefix $w$ of length $m$.  Denote $u=\FW(Q,n-m)$. 
Corollary \ref{cor} and Lemma \ref{obs} imply that 
\begin{itemize}
	\item $w=\pref_m(u)$ if $m\leq n-m$, and 
	\item $w=u \cdot {\tt |u|\cdot (|u|+1)\cdots (m-1)}$ otherwise.
\end{itemize}
\end{enumerate}
 This can be succinctly stated as:
\[\FW(P,n)[i]=\left\{
\begin{array}{cl}
	\FW(Q,n-m)[i\mymod m] &\quad \text{if $(i\mymod m)<n- m$}\\
	i \mymod m&\quad  \text{otherwise}.
\end{array}
\right.\]

\begin{example}
Let $P=\{5,7\}$ and $n=8$.
Recursive definition of $\FW(P,8)$ leads to
\begin{align*}
	P=Q_0&=\{5,7\} & n=n_0&=8 \\
	 Q_1&=\{2,5\} & n_1&=3 \\
	 Q_2&=\{2,3\} & n_2&=1 
\end{align*}
In order to obtain the word $$u_0=\FW(Q_0,n_0)=\FW(P,8)$$ we will need words $$u_1=\FW(Q_1,n_1) \quad \text{ and}\quad u_2=\FW(Q_2,n_2).$$
Since $n_2=1$, we have $u_2=\text{\tt 0}$. From the point (2) above we have $$u_1=\pref_3(w_1^\omega)\quad\text{ where}\quad w_1={\tt 01}.$$ Therefore $u_1={\tt 010}$. Similarly, we get $$u_0=\pref_8(w_0^\omega) \quad\text{where}\quad w_0={\tt 01034},$$ whence $\FW(P,8)={\tt 01034010}$.

Schematically:
\[
\begin{tikzpicture}
\node (Q0) at (0,0) {$Q_0=\{5,7\}$}; 
\node (Q1) at (0,-1) {$Q_1=\{2,5\}$}; 
\node (Q2) at (0,-2) {$Q_2=\{2,3\}$}; 
\node (n0) at (2,0) {$n_0=8$}; 
\node (n1) at (2,-1) {$n_1=3$}; 
\node (n2) at (2,-2) {$n_2=1$}; 
\node (u0) at (6,0) {$u_0={\tt 01034010}$}; 
\node (u1) at (6,-1) {$u_1={\tt 010}$}; 
\node (u2) at (6,-2) {$u_2={\tt 0}$}; 
\node (w0) at (9,0) {$w_0={\tt 01034}$}; 
\node (w1) at (9,-1) {$w_1={\tt 01}$}; 
\draw[thick, ->] (1.2,0)--(1.2,-2);
\draw[->,thick] (n2)--(u2);
\draw[->,thick] (u2)--(w1);
\draw[->,thick] (w1)--(u1);
\draw[->,thick] (u1)--(w0);
\draw[->,thick] (w0)--(u0);
\end{tikzpicture}
\]
\end{example}

 From the above example we see that the procedure has two parts: \uv{descending} and \uv{ascending}, which are called \uv{Reduction} and \uv{Extension} in \cite{TZ2}. The end of reduction can be defined in several ways. We have seen that we can turn to extension as soon as we know $\FW(Q_i,n_i)$. This typically happens if $n_i\leq \min Q_i$, or if $\min Q_i=\gcd(Q_i)$.
\section{Concluding remarks}
As already remarked, the above algorithm is identical with Algorithm B from \cite{TZ2}. Even all arguments we use can be in some way traced back to similar arguments in literature. Nevertheless, I believe that the description presented here gives another evidence to the fact that the equivalence class approach is not only simple but also efficient and intuitive. (Another elegant example, in my opinion, is the proof of the fact that the extremal FW-word is a palindrome, given in \cite{Holub1}.)  

One possible drawback can be a bit discouraging notation like $\sim_{P,k}$, and the fact that notions like \uv{equivalence closure} may sound \uv{too algebraic} to some ears. Computer theorists could therefore like to translate the exposition into graph language and speak about edges instead of generating relations and about connected components instead of equivalence classes. The rest will be the same.

\bibliographystyle{plain}
\bibliography{multalg}

\begin{thebibliography}{1}

\bibitem{CI}
S.~Constantinescu and L.~Ilie.
\newblock Generalised {F}ine and {W}ilf's theorem for arbitrary number of
  periods.
\newblock {\em Theoret. Comput. Sci.}, 339(1):49--60, 2005.

\bibitem{Holub1}
{\v{S}}t{\v{e}}p{\'a}n Holub.
\newblock On multiperiodic words.
\newblock {\em Theor. Inform. Appl.}, 40(4):583--591, 2006.

\bibitem{HolubCor}
{\v{S}}t{\v{e}}p{\'a}n Holub.
\newblock Corrigendum: On multiperiodic words.
\newblock {\em Theor. Inform. Appl.}, 45(4):467--469, 2011.

\bibitem{TZ1}
R.~Tijdeman and L.~Zamboni.
\newblock Fine and {W}ilf words for any periods.
\newblock {\em Indag. Math. (N.S.)}, 14(1):135--147, 2003.

\bibitem{TZ2}
R.~Tijdeman and L.Q. Zamboni.
\newblock Fine and {W}ilf words for any periods {II}.
\newblock {\em Theoretical Computer Science}, 410(30â€“32):3027 -- 3034,
  2009.

\end{thebibliography}

\end{document}